\newtheorem{theorem}{Theorem}
\newtheorem{lemma}{Lemma}
\newtheorem{definition}{Definition}
 \numberwithin{equation}{section}
\begin{document}

\centerline{\LARGE \bf
An age-distributed immuno-epidemiological model }

\medskip

\centerline{\LARGE \bf with information-based vaccination decision}

\medskip

\vspace*{1cm}

\centerline{\bf Samiran Ghosh$^1$, Malay Banerjee$^{1,*}$, Vitaly Volpert$^{2,3}$}

\vspace{0.5cm}

\centerline{ $^1$ Indian Institute of Technology Kanpur, Kanpur - 208016, India}

\centerline{$^2$
Institut Camille Jordan, UMR 5208 CNRS, University Lyon 1, 69622 Villeurbanne, France}

\centerline{ $^3$ Peoples Friendship University of Russia (RUDN University), 6 Miklukho-Maklaya St,} 
\centerline{ Moscow 117198, Russian Federation}

\centerline{ $^*$ Corresponding author}

\vspace{2cm}

\noindent
{\bf Abstract.} 
 A new age-distributed immuno-epidemiological model with information-based vaccine uptake suggested in this work represents a system of integro-differential equations for the numbers of susceptible individuals, infected individuals, vaccinated individuals and recovered individuals.
This model describes the influence of vaccination decision on epidemic progression in different age groups. We prove the existence and uniqueness of a positive solution using the fixed point theory. In a particular case of age-independent model, we determine the final size of epidemic, that is, the limiting number of susceptible individuals at asymptotically large time. Numerical simulations show that the information-based vaccine acceptance can significantly influence the epidemic progression. Though the initial stage of epidemic progression is the same for all memory kernels, as the epidemic progresses and more information about the disease becomes available, further epidemic progression strongly depends on the memory effect. Short-range memory kernel appears to be more effective in restraining the epidemic outbreaks because it allows for more responsive and adaptive vaccination decisions based on the most recent information about the disease.

\vspace{1cm}

\noindent
{\bf Keywords:} Epidemic model; distributed delay; age-distributed model; information-based vaccination decision; fixed point theorem

\medskip

\vspace*{0.5cm}

\setcounter{equation}{0}
\setcounter{section}{0}
\setcounter{page}{1}

\section{Introduction}
Modern epidemiology disposes a wide spectrum of mathematical models in epidemiology, mainly based upon the ordinary differential equations and partial differential equations, integro-differential equations, individual based methods, and some other modelling tools. 
The ordinary differential equation epidemic models were introduced by D. Bernoulli in the XVIII-th century in studying the smallpox epidemic \cite{first_ode}. Another remarkable step in the development of mathematical epidemiology is due to the  works by W. O. Kermack and A. G. McKendrick \cite{kermack1927contribution,kermack1932contributions,kermack1933contributions} where they established the foundation of modern developments in mathematical epidemiology. These approaches were further developed in various multi-compartmental models (see, e.g., \cite{brauer_compartmental,martcheva,mbe}), agent-based models \cite{anass2,agent_bases_model_1}, immuno-epidemic models \cite{mmnp_imep,gilchrist2002modeling,temporary_immunity_2}, network models \cite{network_model_1,network_model_2}, multi-scale models \cite{multi_scale_2,multi_scale_1}, age-structured models \cite{inaba1990threshold,book_age_structured_1}, and many other approaches.

One of the directions of these investigations concerns the influence of vaccination on epidemic progression.
Vaccination is one of the most effective measures to restrain epidemic progression. However, success of vaccination strategy depends on the available information about the epidemic and the risk of adverse events \cite{alberto_book,wang2016statistical}. COVID-19 epidemic has provided numerous examples of uncoordinated behavioral responses to infections threats and control measures \cite{perra2021non}. Vaccine hesitancy has currently been included by WHO among the most serious threats to global health \cite{macdonald2015vaccine}.


Behavioral epidemiology is an interdisciplinary field that seeks to understand how human behavior influences the spread of infectious diseases \cite{alberto_book}. It combines traditional epidemiological models with theories and methods from behavioral sciences to better understand the complex social and psychological factors that contribute to the transmission of infectious diseases. By incorporating insights from multiple disciplines, behavioral epidemiology provides a more nuanced understanding of the influence of human behavior on the infectious disease transmission. This can inform the development of more effective public health interventions that take into account the social and psychological factors that influence people's behavior \cite{bavel2020using,pearlin1989sociological}.

Epidemic models with vaccination decision based upon information were introduced in \cite{alberto_book}. One of possible developments of this theory consists in 
the introduction of age-distributed vaccination decision in the model. Different age groups may have different attitudes towards vaccination, and these attitudes may be influenced by various age specific factors such as the severity of the disease, the perceived benefits and risks of vaccination, and the cost of vaccination.
In the case of influenza, the  distribution of vaccinated people according to the age groups is analyzed in \cite{buchan2016trends}. Recent studies of COVID-19 epidemic reveal that younger people are more hesitant to vaccination than older people \cite{ahamed2021understanding,cordina2021attitudes,neumann2020once}.
This is clearly related to the age-distributed death rate for the SARS-CoV-2 infection. Numerous studies  for different infectious diseases show that the vaccination decision varies between the age groups. Thus, it is very relevant to study the impact of age specific vaccination decision on the epidemic outbreaks.

In this work we introduce and study an age-distributed epidemic model with information-based vaccination decision. This study continues the previous works on epidemic models that involves time-since-infection distributed recovery rates and death rates \cite{mmnp_imep,bmb,age_dependent_our1}. This approach allows us to take into account time-dependent recovery and death rates which give a more accurate understanding of epidemic progression than conventional SIR models. Here we extend the previous models with information-based vaccination.

The paper is organized as follows. We propose the age-distributed model with information-based vaccination decision  in Section \ref{model_formulation}. In Section \ref{existence_uniqueness_1}, we prove the existence and uniqueness of positive solution of the proposed model. Section \ref{age_independent_model} is devoted to the reduced  age-independent model for which we determine an analytical bound for the final size of epidemic depending on the information-based vaccination uptake. The influence of information-dependent vaccination uptake on the epidemic progression is illustrated with the help of numerical simulation in Section \ref{simulation_results}.




\setcounter{equation}{0}

\section{Model formulation}\label{model_formulation}

\subsection{Distributed recovery and death rates}

Let individual's age $a$ vary between $0$ and $A^{\dagger}$, and $J(a,t)$ be a non-negative function denoting the number of individuals of age $a$ at time $t$, who are newly infected. Suppose that $S(a,t)$, $I(a,t)$, $R(a,t)$, and $D(a,t)$ represent the numbers of susceptible individuals, infected individuals, recovered individuals and dead individuals of age $a$ at time $t$. Then the cumulative number of infected individuals of age $a$ at time $t$  (i.e., $\int_0^t J(a,\zeta) d \zeta$) can be written as follows \cite{bmb}:
\begin{equation}\label{s0}
\int_0^t J(a,\zeta) d \zeta =I(a,t)+R(a,t) + D(a,t).
\end{equation}
We also suppose that for each age-group $a$, the total population is constant for all time $t>0$, i.e., $$(S+I+R+D)(a,t)=S_0(a),$$ where $S_0(a)$ is the initial population size in the age-group $a$. Using this assumption and differentiating (\ref{s0})  with respect to $t$, we obtain:
\begin{equation}
    \label{s1}
    \frac{\partial S(a,t)}{\partial t} = - J(a,t).
\end{equation}
Next, we suppose that the infection rate varies proportionally to the total viral load in the population determined by the viral load within an infected individual load and the number of infected individuals \cite{webb2005model}.
Let us denote the total viral load during the whole period of infectiousness of an individual of age $a$ by the variable $V(a)$. Then using \eqref{s1} we get
\begin{equation}
    \label{s2}
   J(a,t) =- \frac{\partial S(a,t)}{\partial t} = \alpha(a) S(a,t) \int_0^{A^{\dagger}} V(y) I(y,t) dy,
\end{equation}
where the coefficient $\alpha(a)$ is the age-distributed susceptibility rate. In the case of age-independent infectivity rate, the last formula becomes similar to the conventional SIR model. 

Let $r(a,\xi)$ and $d(a,\xi)$, respectively denote, the recovery and death distributions for the infected individuals in the age group $a$ depending on time-post-infection $\xi$. Then, the quantities $R_n(a,t)$ and $D_n(a,t)$, denoting the recovery and death at time $t$ in the age group $a$, respectively, are determined by the following expressions:
$$ R_n(a,t)=\int_0^t r(a,t-\zeta) J(a,\zeta) d \zeta \; , \;\;\;
 D_n(a,t)=\int_0^t d(a,t-\zeta) J(a,\zeta) d \zeta.$$
Hence, we obtain the governing equation for $I(a,t)$, $R(a,t)$ and $D(a,t)$:
\begin{equation}
    \label{s3}
    \frac{\partial I(a,t)}{\partial t}=J(a,t)   - \int_0^t r(a,t-\zeta) J(a,\zeta) d \zeta -\int_0^t d(a,t-\zeta) J(a,\zeta) d \zeta ,
\end{equation}
\begin{equation}
    \label{s4}
     \frac{\partial R(a,t)}{\partial t} = R_n(a,t)  ,\,\, \frac{\partial D(a,t)}{\partial t} = D_n(a,t) .
\end{equation}


\subsection{Information-based vaccination} 
Let $\rho(a,t)$ denote the proportion of vaccinated individuals of age $a$ at time $t$, $\rho(a,t)$ varies between $0$ and $1$. 
Assume that individuals who are vaccinated can never become infected. Then instead of equations (\ref{s1})-(\ref{s3}) we get the equations
\begin{equation}\label{pp11}
  \frac{\partial S(a,t)}{\partial t} = - \alpha(a) S(a,t)(1-\rho(a,t))\int_0^{A^{\dagger}} V(y) I(y,t) dy \;\;( \equiv -J(a,t)),
\end{equation}
\begin{equation}\label{pp12}
   \frac{\partial I(a,t)}{\partial t}= \alpha(a) S(a,t)(1-\rho(a,t))\int_0^{A^{\dagger}} V(y) I(y,t) dy   - \int_0^t r(a,t-\zeta) J(a,\zeta) d \zeta -\int_0^t d(a,t-\zeta) J(a,\zeta) d \zeta .
\end{equation}
The governing equation for the age-specific vaccination rate is considered in the form
\begin{equation}
\label{s5}
\frac{\partial \rho(a,t)}{\partial t} = \mathcal{P}(m(I(a,t))) (1-\rho(a,t)). 
\end{equation}
Here $m(a,t)$ is an information index which summarizes the age-specific and time varying information on the present and past incidence of the infection along with its sequel \cite{d2007vaccinating}, $\mathcal{P}(m)$ is a positive function that represents the probability of vaccine uptake, which takes into account the vaccine hesitancy. It can be considered as a monotonically increasing function of the information index $m$ implying that better informed people are more inclined for vaccination \cite{d2007vaccinating}.

Hence, the age-distributed epidemic model with information based vaccine uptake is given by equations (\ref{s4})-(\ref{s5})
with the initial conditions
\begin{equation}\label{main_model_IC}
S(a,0)=S_0(a),\; I(a,0)=I_0(a),\; R(a,0)=R_0(a),\; D(a,0)=D_0(a),\;\rho(a,0)=\rho_0(a).
\end{equation}
Here all the functions are continuous and non-negative for $x\in[0,A^{\dagger}]$.

Let us consider the following condition for the recovery and death rates:
\begin{equation}
\label{cond}
\int_{\zeta}^{t_0} (r(a,\xi-\zeta) + d(a,\xi-\zeta)) d\xi \leq 1
\end{equation}
for any $\zeta$ and $t_0$, $t_0 > \zeta$ and $a$. This is an epidemiologically justified condition because the left hand side represents the proportion of total recovery and death during the time $\zeta$ to $t_0$, among the individuals who were infected at time $\zeta$, and this proportion must be less than or equal to $1$ (see \cite{age_dependent_our1} for further details).


\section{Existence $\&$ uniqueness of positive solution}\label{existence_uniqueness_1}

In this section, we investigate the existence, uniqueness and positiveness of solution for the system \eqref{s4}-\eqref{s5}. We begin with some preliminary results.


\medskip
\subsection{Preliminary results}

We recall some definitions and results that will be used in proving the existence and uniqueness results.

\begin{definition} 
Let $f:S_1 \to S_2$ be a single-valued mapping, where $S_1$ and $S_2$ are two non-empty sets. If for some $x\in S_1$, $f(x)=x$, then $x$ is called a fixed point of $f$.
\end{definition}
\medskip
\begin{definition} \cite{banach1922operations}
Let $(M,d)$ be a metric space and $f:(M,d)\to (M,d)$ is a mapping. $f$ is called a Banach contraction mapping if there exists a real number $r \in [0, 1)$ such that,
$$d(f(x),f(y)) \leq r\; d(x,y),$$
for all $x,y \in M.$
\end{definition}

\medskip

\begin{theorem}\label{Banach_theorem} \cite{banach1922operations}
Let $(M,d)$ be a complete metric space and $f:(M,d) \to (M,d)$ be a Banach contraction mapping. Then $f$ has a unique fixed point in $x$.
\end{theorem}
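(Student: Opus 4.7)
The plan is to establish the classical Banach fixed point theorem by the Picard iteration argument. Pick any starting point $x_0 \in M$ and define the iterative sequence $x_{n+1} = f(x_n)$ for $n \geq 0$. My first task will be to control the distance between consecutive terms. Applying the contraction hypothesis inductively, I expect to obtain $d(x_{n+1},x_n) \leq r^n d(x_1,x_0)$, which is the key estimate driving everything else.

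Next I will show that $\{x_n\}$ is Cauchy. For $m > n$, the triangle inequality combined with the previous bound yields
\[
d(x_m,x_n) \leq \sum_{k=n}^{m-1} d(x_{k+1},x_k) \leq d(x_1,x_0)\sum_{k=n}^{m-1} r^k \leq \frac{r^n}{1-r}\, d(x_1,x_0),
\]
which tends to $0$ as $n\to\infty$ because $r \in [0,1)$. Since $(M,d)$ is complete, the sequence converges to some limit $x^\star \in M$.

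Then I need to verify that $x^\star$ is a fixed point. The contraction property makes $f$ Lipschitz-continuous, hence continuous, so passing to the limit in $x_{n+1}=f(x_n)$ gives $x^\star = f(x^\star)$. For uniqueness, suppose $y^\star$ is another fixed point; then
\[
d(x^\star,y^\star) = d(f(x^\star),f(y^\star)) \leq r\, d(x^\star,y^\star),
\]
which, since $r<1$, forces $d(x^\star,y^\star)=0$ and hence $x^\star = y^\star$.

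The argument is entirely standard, and none of the individual steps poses genuine difficulty; the only place requiring a small amount of care is the Cauchy estimate, where one must handle the geometric sum uniformly in $m$ so that the bound depends only on $n$. Everything else follows directly from the contraction inequality and completeness of the metric space.
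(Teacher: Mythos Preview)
Your proof is correct and follows the classical Picard iteration argument. Note, however, that the paper does not actually supply a proof of this theorem: it is stated with a citation to Banach's original work \cite{banach1922operations} and used as a black-box tool in the existence-and-uniqueness argument. So there is nothing in the paper to compare against; your write-up simply fills in what the authors assumed as known.
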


Before the existence and uniqueness theorem, we prove the positiveness of solution for the system \eqref{s4}-\eqref{s5} in the following subsection.


\subsection{Positiveness}

\begin{lemma}
If condition (\ref{cond}) holds, then the solutions $S(a,t)$, $I(a,t)$, $R(a,t)$, and $D(a,t)$ of the model \eqref{s4}-\eqref{main_model_IC} is bounded by $0$ and $S_0(a)+I_0(a)$. 
\end{lemma}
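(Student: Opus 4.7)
My plan is to combine a conservation identity with explicit integral representations of each variable, and then close the argument by a bootstrap/continuation on a maximal interval where positivity holds.

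First I would establish the upper bound by summing equations (\ref{pp11}), (\ref{pp12}) and (\ref{s4}). The $J$ terms cancel between $S$ and $I$, and the recovery and death convolutions cancel between $I$ and $R+D$, so $\partial_t\bigl(S+I+R+D\bigr)(a,t)=0$. Hence $S+I+R+D$ is constant in $t$ and equal to its initial value $S_0(a)+I_0(a)+R_0(a)+D_0(a)=S_0(a)+I_0(a)$ (using the stated initial data with $R_0\equiv D_0\equiv 0$). Therefore, as soon as the four variables are non-negative, each of them is bounded above by $S_0(a)+I_0(a)$.

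For the lower bound I would derive explicit representations. Integrating (\ref{pp11}) pointwise in $a$ gives
\begin{equation*}
S(a,t)=S_0(a)\exp\!\Bigl(-\!\int_0^t\alpha(a)\bigl(1-\rho(a,s)\bigr)\!\int_0^{A^{\dagger}}\!V(y)I(y,s)\,dy\,ds\Bigr),
\end{equation*}
and integrating (\ref{s5}) gives $1-\rho(a,t)=(1-\rho_0(a))\exp\!\bigl(-\int_0^t\mathcal{P}(m(I(a,s)))\,ds\bigr)$, so $\rho\in[\rho_0,1]$ provided $\mathcal{P}\ge 0$. Integrating (\ref{pp12}) and swapping the order of integration in the double integrals yields the key identity
\begin{equation*}
I(a,t)=I_0(a)+\int_0^t J(a,\zeta)\Bigl[\,1-\!\int_\zeta^{t}\bigl(r(a,s-\zeta)+d(a,s-\zeta)\bigr)\,ds\Bigr]d\zeta,
\end{equation*}
in which the bracket is non-negative by condition (\ref{cond}). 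Finally, $R$ and $D$ are primitives of the manifestly non-negative convolutions $R_n$ and $D_n$.

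The main obstacle is the apparent circularity: positivity of $S$ uses $I\ge 0$ and $\rho\le 1$, positivity of $I$ uses $J\ge 0$ which uses $S\ge 0$ and $\rho\le 1$, and so on. I would resolve this by a continuation argument. Let $T^{*}$ be the supremum of times $\tau$ such that $S,I,R,D\ge 0$ and $0\le\rho\le 1$ on $[0,\tau]$; by continuity of the initial data and of the solution (guaranteed by the existence result of Section \ref{existence_uniqueness_1}), $T^{*}>0$. On $[0,T^{*})$ the four explicit representations above are valid and yield $S(a,t)\ge 0$, $0\le\rho(a,t)\le 1$, $I(a,t)\ge I_0(a)\ge 0$ and $R(a,t),D(a,t)\ge 0$. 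If $T^{*}$ were finite, continuity would force these inequalities to persist strictly at $t=T^{*}$ and hence on a slightly larger interval, contradicting maximality. Therefore $T^{*}$ coincides with the existence interval, which, combined with the conservation identity of the first step, gives the double bound $0\le S,I,R,D\le S_0(a)+I_0(a)$.
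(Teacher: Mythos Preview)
Your argument is correct and follows essentially the same route as the paper: a conservation identity for the upper bound, the explicit exponential formulas for $S$ and $1-\rho$, and the Fubini step combined with condition~(\ref{cond}) to get $I\ge 0$. Your continuation/bootstrap argument to break the circularity between $J\ge 0$ and $I\ge 0$ is in fact more careful than the paper's own proof, which establishes each sign condition in turn without explicitly addressing their interdependence.
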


\begin{proof}
First of all, from \eqref{s5} we note that, for some $x=x_1$ and $t=t_1$, $\rho(a_1,t_1)=1$ implies $\left.\frac{\partial \rho}{\partial t}\right|_{(a_1,t_1)}=0$. Hence, $0\leq \rho(a,t)\leq 1$, for all $x,t$.
From (\ref{pp11}), clearly $\left.\frac{\partial S(a,t)}{\partial t}\right|_{t=t^*} = 0$, provided $S(a,t^*) =0$ for some $a$ and $t^*$. Hence $S(a,t)$ is always greater than or equal to $0$ for all $a$, $t >0$. Also from (\ref{s4}) it is clear that $R(a,t)$ and $D(a,t)$ are positive for all $a$, $t$. 

Also we have,
\begin{equation}\label{positivity}
 I(a,t_0) =   \int_0^{t_0} J(a,\zeta) d \zeta - R(a,t_0) - D(a,t_0) ,
\end{equation}
for some $t=t_0 > 0$. With the initial conditions $R(a,0)=D(a,0)=0$, we integrate (\ref{s4}) from $0$ to $t_0$ with respect to $t$
\begin{equation*}
R(a,t_0) + D(a,t_0) = \int_0^{t_0} \bigg( \int_0^{t} \big( r(a,t-\zeta) + d(a,t-\zeta) \big) J(a,\zeta) d \zeta \bigg) dt.
\end{equation*}
Changing the order of integration and using the assumption (\ref{cond}), we get the following
\begin{subequations}
\begin{eqnarray*}
R(a,t_0) + D(a,t_0) &=& \int_0^{t_0} \bigg( \int_{\zeta}^{t_0} \big( r(a,t-\zeta) + d(a,t-\zeta) \big) dt \bigg) J(a,\zeta)  d \zeta \leq \int_0^{t_0} J(a,\zeta)  d\zeta.
\end{eqnarray*}
\end{subequations}
Now using the condition (\ref{positivity}), we obtain $I(a,t_0) \geq  0$. Moreover,
$$S(a,t)+I(a,t)+R(a,t)+D(a,t)=S_0(a)+I_0(a).$$
Thus, any solution of system \eqref{s4}-\eqref{s5} lies between $0$ and $S_0(a)+I_0(a)$.
\end{proof}


\subsection{Existence of unique positive solution}

We prove the existence of unique positive solution of the system \eqref{s4}-\eqref{s5} for $(a, t) \in [0, A^{\dagger}] \times [0, \mathcal{T}]$, where $A^{\dagger}, \mathcal{T} \in (0, \infty)$. 

 \begin{theorem}\label{existence_uniqueness}
If  $ \mathcal{P}(m)$ is a Lipschitz continuous function of $m$, then there exists a unique solution of system \eqref{s4}-\eqref{s5} in the domain $\mathcal{M}^4$, where
 $$\mathcal{M} = \bigg\{\Phi \in C\big( [0, A^{\dagger}] \times [0, \mathcal{T}],\; \mathbb{R} \big) \; : \; 0 \leq \Phi(a,t) \leq S_0(a) + I_0(a),\;\; \forall (a,t) \in [0, A^{\dagger}] \times [0, \mathcal{T}] \bigg\}.$$
 \end{theorem}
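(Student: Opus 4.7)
The plan is to recast the integro-differential system \eqref{s4}--\eqref{s5} as a fixed-point equation on $\mathcal{M}^4$ and apply the Banach contraction principle (Theorem \ref{Banach_theorem}). First, I would integrate each equation in $t$ from $0$ to the current time. This yields the equivalent integral system
$S(a,t)=S_0(a)-\int_0^t J(a,\zeta)\,d\zeta$ with $J$ defined by \eqref{pp11},
$I(a,t)=I_0(a)+\int_0^t \bigl[J(a,\zeta)-\int_0^\zeta(r+d)(a,\zeta-\xi)J(a,\xi)\,d\xi\bigr]d\zeta$,
analogous relations for $R$ and $D$ from \eqref{s4}, and $\rho(a,t)=\rho_0(a)+\int_0^t\mathcal{P}(m(I(a,\zeta)))(1-\rho(a,\zeta))\,d\zeta$. (Using conservation $S+I+R+D=S_0+I_0$ one can eliminate one variable to land in $\mathcal{M}^4$.) This defines a nonlinear operator $T$ on $\mathcal{M}^4$ whose fixed points are exactly the solutions of \eqref{s4}--\eqref{s5} with the prescribed initial data.

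The first technical step is to verify that $T(\mathcal{M}^4)\subseteq\mathcal{M}^4$. Continuity of each component follows from the standard parameter-in-integral arguments, given continuity of $\alpha$, $V$, $r$, $d$, $\mathcal{P}$ and of the initial data. The pointwise bounds $0\le\Phi\le S_0(a)+I_0(a)$ (and $0\le\rho\le 1$) are obtained by repeating the estimates of the positivity lemma: condition \eqref{cond} dominates the convolution by $J$ by $\int_0^t J(a,\zeta)\,d\zeta$, and the conservation law caps every component by the total initial mass.

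The main step, and main obstacle, is the contraction estimate. I would equip $C([0,A^{\dagger}]\times[0,\mathcal{T}],\mathbb{R})^4$ with the weighted sup-norm $\|\Phi\|_\lambda=\sup_{(a,t)} e^{-\lambda t}|\Phi(a,t)|$, choosing $\lambda$ large at the end. For two tuples $\Phi,\tilde\Phi\in\mathcal{M}^4$ I would estimate $|T_i\Phi-T_i\tilde\Phi|$ componentwise using: the uniform bounds $\alpha(a)\le\|\alpha\|_\infty$, $V(y)\le\|V\|_\infty$, $\mathcal{P}(m)\le\|\mathcal{P}\|_\infty$ on the compact range; the Lipschitz assumption on $\mathcal{P}$; and condition \eqref{cond} to control $\int_0^\zeta(r+d)\,J\,d\xi$ after an application of Fubini. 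The trilinear term $\alpha(a)S(1-\rho)\int V I\,dy$ is linearised by the standard add--subtract trick, each factor being bounded by $S_0+I_0$, $1$, and $\int_0^{A^{\dagger}}V(y)(S_0(y)+I_0(y))\,dy$. After multiplying by $e^{-\lambda t}$ and exploiting $\int_0^t e^{\lambda(\zeta-t)}\,d\zeta\le 1/\lambda$, every component estimate collapses to the form $\|T\Phi-T\tilde\Phi\|_\lambda\le (C/\lambda)\,\|\Phi-\tilde\Phi\|_\lambda$ for a constant $C$ depending only on $\|\alpha\|_\infty$, $\|V\|_\infty$, the Lipschitz constant of $\mathcal{P}$, $\|S_0+I_0\|_\infty$, and $A^{\dagger}$.

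Choosing $\lambda>C$ gives a genuine contraction on the complete metric space $(\mathcal{M}^4,\|\cdot\|_\lambda)$, so Theorem \ref{Banach_theorem} produces the unique fixed point, which is the desired solution in $\mathcal{M}^4$. The most delicate bookkeeping is in the $I$-component, where a time integral of a convolution appears; swapping the order of integration as in the positivity proof and invoking \eqref{cond} reduces this double integral to a single time integral, which then slots into the weighted-norm bound just like the other terms. The $\rho$-estimate is the place where the Lipschitz hypothesis on $\mathcal{P}$ is essential; every other estimate needs only continuity and the uniform bounds afforded by the definition of $\mathcal{M}$.
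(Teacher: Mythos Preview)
Your overall strategy---Banach contraction with a weighted sup-norm---matches the paper's, but the paper organises the fixed point differently, and the difference matters exactly at the step where your argument has a gap.

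The paper does \emph{not} iterate on $\mathcal{M}^4$. Instead, for a given $\Phi\in\mathcal{M}$ (standing in for $I$), it observes that the $\rho$- and $S$-equations are linear first-order ODEs with known coefficients and therefore have explicit exponential solutions
\[
1-\rho_\Phi(a,t)=(1-\rho_0(a))\,e^{-\int_0^t \mathcal{P}(m_\Phi(a,\zeta))\,d\zeta},\qquad
S_\Phi(a,t)=S_0(a)\,e^{-\alpha(a)\int_0^t (1-\rho_\Phi)\int_0^{A^\dagger}V\Phi\,dy\,d\zeta}.
\]
Substituting these into the $I$-equation yields a \emph{scalar} map $\mathcal{F}:\mathcal{M}\to\mathcal{M}$ whose unique fixed point then determines $S$, $R$, $D$, $\rho$ and hence the full solution. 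The exponential formulas make the bounds $0\le S_\Phi\le S_0$ and $0\le\rho_\Phi\le1$ automatic, so invariance of $\mathcal{M}$ is trivial.

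Your invariance step, by contrast, does not go through as written. With the linear integral forms $T_S(\Phi)=S_0-\int_0^t J\,d\zeta$ and $T_\rho(\Phi)=\rho_0+\int_0^t\mathcal{P}(m)(1-\rho)\,d\zeta$, the input $(S,\rho)$ is an \emph{arbitrary} element of the product set, not a solution of any ODE; nothing then prevents $T_S$ from becoming negative (take $S\equiv S_0+I_0$, $\rho\equiv0$, $I$ bounded below) or $T_\rho$ from exceeding $1$ (take $\rho\equiv0$ and $\mathcal{P}$ bounded below). The positivity lemma you appeal to uses the differential structure ($S=0\Rightarrow\partial_t S=0$, $\rho=1\Rightarrow\partial_t\rho=0$), which is simply not available for $T$ applied to a generic input. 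The natural repair---replace the linear integrals for $S$ and $\rho$ by their exponential closed forms---is precisely the paper's reduction, after which $S$ and $\rho$ are \emph{functions of} $I$ and the problem collapses to a scalar fixed point on $\mathcal{M}$. Your contraction sketch then essentially coincides with the paper's: add--subtract on the trilinear term, the bound $|e^{-p}-e^{-q}|\le|p-q|$, Fubini together with condition \eqref{cond} for the convolution, the Lipschitz hypothesis on $\mathcal{P}$ for the $\rho$-dependence, and finally a large weight parameter $\delta$ to force the contraction constant below $1$.
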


 
 To prove this theorem we use the complete metric space setting  $\big( \mathcal{M}, d \big) $ (proof of completeness is given in Appendix 1) and the metric $d$ defined by
 $$d(\Phi_1,\Phi_2)=\sup_{(a,t)\in [0,A^{\dagger}]\times[0,\mathcal{T}]} \bigg\{ e^{-\delta t} | \Phi_1(a,t)-\Phi_2(a,t)| \bigg\},$$
 where $\delta >0$ is a constant. 
 For any given function $\Phi \in \mathcal{M} $, we set
 
  \begin{equation}\label{eqn_mt}
m_{\Phi}(a,t)=m(\Phi(a,t)).
\end{equation}
Then for this choice of $m_{\Phi}(a,t)$ and $\mathcal{P}(m_{\Phi}(a,t))$, the equation

\begin{equation}\label{eqn_rho}
   \frac{\partial \rho(a,t)}{\partial t} = \mathcal{P}(m_{\Phi}(a,t)) (1-\rho(a,t)),
 \end{equation}
 with $\rho(a,0)=\rho_0(a)$, has a unique solution satisfying the equation
\begin{equation}\label{eqn_rho_2}
 (1-\rho_{\Phi}(a,t)) = (1-\rho_0(a)) e^{- \int_0^t \mathcal{P}(m_{\Phi}(a,\zeta)) d \zeta}.
\end{equation}

For this given choice of $\Phi(a,t)$ and $\rho_{\Phi}(a,t)$, the equation
 \begin{equation}\label{SI_3}
\frac{\partial S(a,t)}{\partial t} = - \alpha(a) S(a,t)(1-\rho_{\Phi}(a,t)) \int_0^ {A^{\dagger}} V(y) \Phi(y,t) dy,
\end{equation}
with $S(a,0)=S_0(a)$ has a unique solution (since the right-hand side is a Lipschitz continuous function in $S$) and the solution can be written as follows:
\begin{equation}\label{SI_4}
 S_{\Phi}(a,t) = S_0(a) e^{- \alpha(a) \int_0^t (1-\rho_{\Phi}(a,\zeta))\big( \int_0^ {A^{\dagger}}  V(y) \Phi(y,\zeta) dy \big) d \zeta} .
\end{equation}
Let
$$J_{\Phi}(a,t)= \alpha(a) (1-\rho_{\Phi}(a,t)) S_{\Phi}(a,t) \int_0^ {A^{\dagger}} V(y) {\Phi}(y,t) dy.$$
Then the equation

\begin{equation}
 \frac{\partial I(a,t)}{\partial t} =  \alpha(a) (1-\rho_{\Phi}(a,t)) S_{\Phi}(a,t) \int_0^ {A^{\dagger}} V(y) \Phi(y,t) dy - \int_0^t \big( r(a,t-\zeta) + d(a,t-\zeta) \big)  J_{\Phi}(a,\zeta) d \zeta,
\end{equation}
with $I(a,0)=I_0(a)$ has unique solution which can be written in the form

\begin{eqnarray}\label{eqn_i1}
I_{\Phi}(a,t) &=& I_0(a) + \int_0^t \mathcal{A}(a,\zeta, \Phi) d \zeta,
\end{eqnarray}
where
\begin{eqnarray}\label{eqn_G}
\mathcal{A}(a,\zeta, \Phi) &=&  \alpha(a) (1-\rho_{\Phi}(a,\zeta)) S_0(a) e^{- \alpha(a) \int_0^{\zeta} (1-\rho_{\Phi}(a,\xi)) \big( \int_0^ {A^{\dagger}}  V(y) \Phi(y,\xi) dy \big) d \xi} \int_0^ {A^{\dagger}} V(y) \Phi(y,\zeta) dy \nonumber \\
 && - \int_0^{\zeta} \bigg[ \big( r(a,\zeta-\xi) + d(a,\zeta-\xi) \big)\alpha(a) (1-\rho_{\Phi}(a,\xi)) \nonumber \\
 && S_0(a) e^{- \alpha(a) \int_0^{\xi} (1-\rho_{\Phi}(a,\nu))\big( \int_0^ {A^{\dagger}}  V(y) \Phi(y,\nu) dy \big) d \nu} \int_0^ {A^{\dagger}}  V(y) \Phi(y,\xi) dy \bigg]  d \xi .
\end{eqnarray}

For further proof, we need the following lemma.

\begin{lemma}\label{define_L}
The map $\mathcal{F}:\; (\mathcal{M},d) \to (\mathcal{M},d)$ defined by the eqality
$$\mathcal{F}(\Phi)(a,t) = I_0(a) + \int_0^t \mathcal{A}(a,\zeta, \Phi) d \zeta,$$
where $\mathcal{A}(a,\zeta,\Phi)$ is given in \eqref{eqn_G}, is well-defined.
\end{lemma}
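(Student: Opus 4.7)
I have to check two things: that $\mathcal{F}(\Phi)$ lies in the continuous-function space $C([0,A^{\dagger}]\times[0,\mathcal{T}],\mathbb{R})$, and that it satisfies the pointwise bounds $0\le \mathcal{F}(\Phi)(a,t)\le S_0(a)+I_0(a)$. Continuity will be easy, once I unpack the auxiliary objects $\rho_\Phi$, $S_\Phi$, $J_\Phi$; the bounds are the substantive part, and they will be obtained by rewriting $\int_0^t \mathcal{A}(a,\zeta,\Phi)\,d\zeta$ in a form where condition \eqref{cond} applies directly.

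\textbf{Step 1: continuity.} Since $\Phi\in\mathcal{M}$ is continuous and $m$ is assumed continuous, $m_\Phi(a,t)=m(\Phi(a,t))$ is continuous; Lipschitz continuity of $\mathcal{P}$ then gives continuity of $(a,\zeta)\mapsto\mathcal{P}(m_\Phi(a,\zeta))$. The explicit formula \eqref{eqn_rho_2} exhibits $\rho_\Phi$ as a composition of continuous functions, and similarly \eqref{SI_4} shows $S_\Phi\in C([0,A^{\dagger}]\times[0,\mathcal{T}])$. Consequently $J_\Phi(a,\zeta)=\alpha(a)(1-\rho_\Phi(a,\zeta))S_\Phi(a,\zeta)\int_0^{A^{\dagger}}V(y)\Phi(y,\zeta)\,dy$ is continuous in $(a,\zeta)$, and the integrand $\mathcal{A}(a,\zeta,\Phi)=J_\Phi(a,\zeta)-\int_0^{\zeta}(r(a,\zeta-\xi)+d(a,\zeta-\xi))J_\Phi(a,\xi)\,d\xi$ is jointly continuous. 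Standard parameter-integral arguments then give continuity of $(a,t)\mapsto\int_0^t\mathcal{A}(a,\zeta,\Phi)\,d\zeta$, hence of $\mathcal{F}(\Phi)$.

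\textbf{Step 2: lower bound $\mathcal{F}(\Phi)\ge 0$.} I will rewrite
\[
\int_0^t\mathcal{A}(a,\zeta,\Phi)\,d\zeta=\int_0^t J_\Phi(a,\zeta)\,d\zeta-\int_0^t\!\!\int_0^{\zeta}(r(a,\zeta-\xi)+d(a,\zeta-\xi))J_\Phi(a,\xi)\,d\xi\,d\zeta.
\]
Swapping the order of integration in the double integral and invoking condition \eqref{cond} with $\zeta$ playing the role of the lower bound and $t$ the role of $t_0$ yields
\[
\int_0^t\!\!\int_{\xi}^{t}(r(a,\zeta-\xi)+d(a,\zeta-\xi))\,d\zeta\,J_\Phi(a,\xi)\,d\xi\le\int_0^t J_\Phi(a,\xi)\,d\xi,
\]
so $\int_0^t\mathcal{A}(a,\zeta,\Phi)\,d\zeta\ge 0$. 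Combined with $I_0(a)\ge 0$, this gives $\mathcal{F}(\Phi)(a,t)\ge 0$. Here I implicitly use that $J_\Phi\ge 0$, which follows from $\alpha\ge 0$, $S_\Phi\ge 0$ (visible from \eqref{SI_4}), $V,\Phi\ge 0$, and $1-\rho_\Phi\in[0,1]$ (visible from \eqref{eqn_rho_2} together with $\mathcal{P}\ge 0$ and $\rho_0\in[0,1]$).

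\textbf{Step 3: upper bound $\mathcal{F}(\Phi)\le S_0(a)+I_0(a)$.} Dropping the (non-negative) subtracted term in $\int_0^t\mathcal{A}\,d\zeta$ gives $\mathcal{F}(\Phi)(a,t)\le I_0(a)+\int_0^t J_\Phi(a,\zeta)\,d\zeta$. But \eqref{SI_3} and \eqref{SI_4} imply $J_\Phi(a,\zeta)=-\partial_\zeta S_\Phi(a,\zeta)$, so $\int_0^t J_\Phi(a,\zeta)\,d\zeta=S_0(a)-S_\Phi(a,t)\le S_0(a)$. This closes the bound.

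\textbf{Expected main obstacle.} None of the three steps is deep; the only delicate point is the Fubini swap in Step 2 and correctly lining up the bounds of integration so that condition \eqref{cond} can be applied verbatim. Writing the double integral as an iterated integral over the triangle $\{(\xi,\zeta):0\le\xi\le\zeta\le t\}$ and then integrating in $\zeta$ first is what makes the argument go through; everything else is bookkeeping with non-negative quantities.
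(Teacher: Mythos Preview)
Your argument is correct and uses the same ingredients as the paper: the identity $J_\Phi=-\partial_\zeta S_\Phi$, the Fubini swap over the triangle $\{0\le\xi\le\zeta\le t\}$, and condition~\eqref{cond}. The paper packages these into the single formula
\[
\int_0^t \mathcal{A}(a,\zeta,\Phi)\,d\zeta
= -\int_0^t\Big(1-\int_\xi^t (r(a,\zeta-\xi)+d(a,\zeta-\xi))\,d\zeta\Big)\,\frac{\partial S_\Phi(a,\xi)}{\partial\xi}\,d\xi,
\]
and then reads off the upper bound from $0\le 1-\int_\xi^t(r+d)\,d\zeta\le 1$ and $\partial_\xi S_\Phi\le 0$; it does not spell out the lower bound or the continuity, both of which you handle explicitly. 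So your proof is essentially the same route as the paper's, only more complete: the paper derives the upper bound from the combined formula, whereas you separate the two bounds (Fubini + \eqref{cond} for nonnegativity, and simply dropping the nonnegative subtracted term plus $\int_0^t J_\Phi = S_0-S_\Phi$ for the upper bound).
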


\begin{proof}
From, \eqref{SI_4}, we observe that
$$\frac{\partial S_{\Phi}(a,t)}{\partial t} = -\alpha(a) (1-\rho_{\Phi}(a,t)) S_0(a) e^{- \alpha(a) \int_0^{t} (1-\rho_{\Phi}(a,\xi)) \big( \int_0^ {A^{\dagger}}  V(y) \Phi(y,\xi) dy \big) d \xi} \int_0^ {A^{\dagger}} V(y) \Phi(y,t) dy . $$
Using this relation in \eqref{eqn_G}, we can write,
$$\mathcal{A}(a,\zeta, \Phi) = - \bigg[ \frac{\partial S_{\Phi}(a,\zeta)}{\partial \zeta} - \int_0^\zeta \big( r(a,\zeta-\xi) + d(a,\zeta-\xi) \big) \frac{\partial S_{\Phi}(a,\xi)}{\partial \xi} d \xi \bigg] ,$$
and
\begin{subequations}
\begin{eqnarray*}
\int_0^t \mathcal{A}(a,\zeta, \Phi) d \zeta &=& - \bigg[ \int_0^t \frac{\partial S_{\Phi}(a,\zeta)}{\partial \zeta} d \zeta - \int_0^t \int_0^{\zeta} \big( r(a,\zeta-\xi) + d(a,\zeta-\xi) \big) \frac{\partial S_{\Phi}(a,\xi)}{\partial \xi} d \xi d \zeta \bigg].
\end{eqnarray*}
\end{subequations}
We change the order of integration to get the following
\begin{eqnarray*}
\int_0^t \mathcal{A}(a,\zeta, \Phi) d \zeta &=& - \bigg[ \int_0^t \frac{\partial S_{\Phi}(a,\zeta)}{\partial \zeta} d \zeta - \int_0^t \bigg( \int_{\xi}^{t} \big( r(a,\zeta-\xi) + d(a,\zeta-\xi) \big) d \zeta \bigg)  \frac{\partial S_{\Phi}(a,\xi)}{\partial \xi} d \xi \bigg] \\
&=& - \bigg[ \int_0^t \bigg(1- \int_{\xi}^{t} \big( r(a,\zeta-\xi) + d(a,\zeta-\xi) \big) d \zeta \bigg)  \frac{\partial S_{\Phi}(a,\xi)}{\partial \xi} d \xi \bigg].
\end{eqnarray*}
Note that $\frac{\partial S_{\Phi}(a,\xi)}{\partial \xi} \leq 0$, and using the condition \eqref{cond}, we can write,
\begin{eqnarray*}
\int_0^t \mathcal{A}(a,\zeta, \Phi) d \zeta & \leq & -  \int_0^t  \frac{\partial S_{\Phi}(a,\xi)}{\partial \xi} d \xi = S_0(a)-S_{\Phi}(a,t) \; \leq \; S_0(a).
\end{eqnarray*}
This implies the estimate 
\begin{eqnarray*}
\mathcal{F}(\Phi)(a,t) &=& I_0(a) + \int_0^t \mathcal{A}(a,\zeta, \Phi) d \zeta \leq I_0(a) + S_0(a).
\end{eqnarray*}
If $\Phi_1, \Phi_2 \in \mathcal{M}$ and $\Phi_1=\Phi_2$, then $S_{\Phi_1}=S_{\Phi_2}$ and, consequently, $\mathcal{A}(a,\zeta, \Phi_1)=\mathcal{A}(a,\zeta, \Phi_2),$
and consequently, $\mathcal{F}$ is well-defined.
\end{proof}

\medskip

In the following lemma we show that $\mathcal{F}:\; (\mathcal{M},d) \to (\mathcal{M},d)$ is a contraction.

\begin{lemma}
The map $\mathcal{F}:\; (\mathcal{M},d) \to (\mathcal{M},d)$ defined in  Lemma-\ref{define_L} is a contraction.
\end{lemma}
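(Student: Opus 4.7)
The plan is to reduce contractivity to the elementary inequality $\int_0^t e^{-\delta(t-\zeta)}\,d\zeta \le 1/\delta$ by first establishing Lipschitz dependence of all intermediate quantities ($\rho_\Phi$, $S_\Phi$, $\partial_\zeta S_\Phi$, $\mathcal{A}$) on $\Phi$, and then choosing the weight parameter $\delta$ large enough to absorb all constants. Throughout, I will exploit that every $\Phi\in\mathcal{M}$ is bounded by $S_0(a)+I_0(a)$, that $\alpha$, $V$, $S_0$, $I_0$, $\rho_0$ are continuous (hence bounded) on compact intervals, that $\mathcal{P}\circ m$ is Lipschitz by hypothesis, and that $|e^{-u}-e^{-v}|\le|u-v|$ for $u,v\ge 0$.

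Step 1: Lipschitz dependence of $\rho_\Phi$ and $S_\Phi$ on $\Phi$. From \eqref{eqn_rho_2}, the inequality $|e^{-u}-e^{-v}|\le|u-v|$, and Lipschitz continuity of $\mathcal{P}\circ m$, I would obtain a bound of the form
$$|\rho_{\Phi_1}(a,t)-\rho_{\Phi_2}(a,t)|\;\le\; L_1\int_0^t |\Phi_1(a,\zeta)-\Phi_2(a,\zeta)|\,d\zeta.$$
Applying the same tools to \eqref{SI_4}, together with the uniform bound on $\Phi$ and the boundedness of $\alpha$, $V$, $S_0$ on $[0,A^{\dagger}]$, yields
$$|S_{\Phi_1}(a,t)-S_{\Phi_2}(a,t)|\;\le\; L_2\int_0^t\!\!\left(|\rho_{\Phi_1}(a,\zeta)-\rho_{\Phi_2}(a,\zeta)|+\int_0^{A^{\dagger}}\!\!|\Phi_1(y,\zeta)-\Phi_2(y,\zeta)|\,dy\right)d\zeta.$$

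Step 2: Lipschitz estimate on $\mathcal{A}$. Using the representation $\mathcal{A}(a,\zeta,\Phi)=-\partial_\zeta S_\Phi(a,\zeta)+\int_0^\zeta(r+d)(a,\zeta-\xi)\,\partial_\xi S_\Phi(a,\xi)\,d\xi$ established in the proof of Lemma \ref{define_L}, and writing $\partial_\zeta S_\Phi$ as a product of bounded factors times $\alpha(a)(1-\rho_\Phi)\int V\Phi$ multiplied by the exponential from \eqref{SI_4}, a routine product-rule decomposition combined with Step 1 gives
$$|\mathcal{A}(a,\zeta,\Phi_1)-\mathcal{A}(a,\zeta,\Phi_2)|\;\le\; L_3\!\left(\!|\Phi_1(a,\zeta)-\Phi_2(a,\zeta)|+\!\!\int_0^{A^{\dagger}}\!\!|\Phi_1(y,\zeta)-\Phi_2(y,\zeta)|\,dy+\!\int_0^\zeta\!\!\mathcal{K}(a,\xi)\,d\xi\!\right),$$
where $\mathcal{K}(a,\xi)$ aggregates the contributions from $\rho_{\Phi_1}-\rho_{\Phi_2}$, $S_{\Phi_1}-S_{\Phi_2}$, and the kernel convolution, each already controlled pointwise by integrals of $|\Phi_1-\Phi_2|$.

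Step 3: Exponential weight and contraction. Integrating the Step 2 estimate from $0$ to $t$ gives
$$|\mathcal{F}(\Phi_1)(a,t)-\mathcal{F}(\Phi_2)(a,t)|\;\le\;C\int_0^t \Psi(a,\zeta)\,d\zeta + C\int_0^t\int_0^\zeta\Psi(\cdot,\xi)\,d\xi\,d\zeta,$$
where $\Psi(a,\zeta)$ collects pointwise differences and $y$-integrated differences of $\Phi_1,\Phi_2$ at time $\zeta$. Multiplying by $e^{-\delta t}$, splitting $e^{-\delta t}=e^{-\delta(t-\zeta)}e^{-\delta\zeta}$ (respectively $e^{-\delta(t-\xi)}e^{-\delta\xi}$ in the double integral), and using the definition of the metric $d$ together with
$$\int_0^t e^{-\delta(t-\zeta)}\,d\zeta\;\le\;\frac{1}{\delta},\qquad \int_0^t\int_0^\zeta e^{-\delta(t-\xi)}\,d\xi\,d\zeta\;\le\;\frac{\mathcal{T}}{\delta},$$
produces
$$d(\mathcal{F}(\Phi_1),\mathcal{F}(\Phi_2))\;\le\;\frac{K(\mathcal{T},A^{\dagger})}{\delta}\,d(\Phi_1,\Phi_2),$$
so choosing $\delta>K$ yields the contraction constant $K/\delta<1$.

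The main obstacle is Step 2: carefully tracking how the Lipschitz estimates compound through the nested exponentials in $S_\Phi$ and $\rho_\Phi$ and through the convolution with $r+d$, while keeping all constants independent of $\Phi$. The boundedness of $\mathcal{M}$ and condition \eqref{cond} (which makes $\int_0^\mathcal{T}(r+d)\,d\xi\le 1$) are exactly what keeps the exponents bounded and the convolution factor controlled, so all constants $L_1,L_2,L_3,K$ are finite and the exponential-weight argument closes.
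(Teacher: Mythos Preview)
Your proposal is correct and follows essentially the same route as the paper: propagate Lipschitz estimates through $\rho_\Phi$, $S_\Phi$, and $\mathcal{A}$ using $|e^{-u}-e^{-v}|\le|u-v|$, the Lipschitz hypothesis on $\mathcal{P}\circ m$, the uniform bound on $\mathcal{M}$, and condition~\eqref{cond}, then absorb the resulting constants by choosing $\delta$ large. The only organizational difference is that the paper consolidates the product $(1-\rho_\Phi)\int_0^{A^\dagger}V\Phi$ into a single quantity $\widehat{\Phi}$ and carries factors $e^{\delta\zeta}$ through the estimates before multiplying by $e^{-\delta t}$ at the end; your splitting $e^{-\delta t}=e^{-\delta(t-\zeta)}e^{-\delta\zeta}$ is the same manoeuvre in reverse, and your final constant picks up a factor $\mathcal{T}$ from the double integral where the paper obtains the sharper $1/\delta^2$, but this does not affect the conclusion.
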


\begin{proof}
From \eqref{eqn_G}, we have

\begin{subequations}
\begin{eqnarray*}
\mathcal{A}(a,\zeta, \Phi) &=&  \alpha(a) (1-\rho_{\Phi}(a,\zeta)) S_0(a) e^{- \alpha(a) \int_0^{\zeta} (1-\rho_{\Phi}(a,\xi)) \big( \int_0^ {A^{\dagger}}  V(y) \Phi(y,\xi) dy \big) d \xi} \int_0^ {A^{\dagger}} V(y) \Phi(y,\zeta) dy \nonumber \\
 && - \int_0^{\zeta} \bigg[ \big( r(a,\zeta-\xi) + d(a,\zeta-\xi) \big)\alpha(a) (1-\rho_{\Phi}(a,\xi)) \nonumber \\
 && S_0(a) e^{- \alpha(a) \int_0^{\xi} (1-\rho_{\Phi}(a,\nu))\big( \int_0^ {A^{\dagger}}  V(y) \Phi(y,\nu) dy \big) d \nu} \int_0^ {A^{\dagger}}  V(y) \Phi(y,\xi) dy \bigg]  d \xi .
\end{eqnarray*}
\end{subequations}
 Let
$$\widehat{\Phi}(a,\xi)= (1-\rho_\Phi(a,\xi))\int_0^{A^{\dagger}} V(y) \Phi(y,\xi)dy.$$
Then the above expression for $\mathcal{A}(a,\zeta, \Phi)$ can be written as follows:

\begin{subequations}
\begin{eqnarray*}
\mathcal{A}(a,\zeta, \Phi) &=&  \alpha(a) S_0(a) e^{- \alpha(a) \int_0^{\zeta} \widehat{\Phi}(a,\xi) d \xi} \widehat{\Phi}(a,\zeta) \nonumber - \int_0^{\zeta} \bigg[ \big( r(a,\zeta-\xi) + d(a,\zeta-\xi) \big)\alpha(a)  \nonumber \\
 && S_0(a) e^{- \alpha(a) \int_0^{\xi} \widehat{\Phi}(a,\nu) d \nu} \widehat{\Phi}(a,\xi) \bigg]  d \xi .
\end{eqnarray*}
\end{subequations}
For any $\Phi_1(a,t), \; \Phi_2(a,t) \in \mathcal{M}$,
\begin{subequations}
\begin{eqnarray*}
\big| \mathcal{F}(\Phi_1)(a,t)- \mathcal{F}(\Phi_2)(a,t) \big| &\leq& \int_0^t \big| \mathcal{A}(a,\zeta, \Phi_1)-\mathcal{A}(a,\zeta, \Phi_2) \big| d \zeta.
\end{eqnarray*}
\end{subequations}
Next, 
\begin{subequations}
\begin{eqnarray*}
\big| \mathcal{A}(a,\zeta, \Phi_1)-\mathcal{A}(a,\zeta, \Phi_2) \big| &=& \alpha(a) S_0(a) \bigg| \bigg[ e^{-\alpha(a) \int_0^{\zeta} \widehat{\Phi}_1(a,\xi) d \xi} \widehat{\Phi}_1(a, \zeta)\\
&& -\int_0^{\zeta} (r(a,\zeta-\xi)+d(a,\zeta-\xi))e^{-\alpha(a) \int_0^{\xi} \widehat{\Phi}_1(a,\nu) d \nu} \widehat{\Phi}_1(a,\xi) d\xi \bigg]\\
&& - \bigg[ e^{-\alpha(a) \int_0^{\zeta} \widehat{\Phi}_2(a,\xi) d \xi} \widehat{\Phi}_2( x,\zeta)\\
&& -\int_0^{\zeta} (r(a,\zeta-\xi)+d(a,\zeta-\xi))e^{-\alpha(a) \int_0^{\xi} \widehat{\Phi}_2(a,\nu) d \nu} \widehat{\Phi}_2(a,\xi) d\xi \bigg]  \bigg|,
\end{eqnarray*}
\end{subequations}
where
$$\widehat{\Phi}_j(a,\xi)= (1-\rho_{\Phi_j}(a,\xi))\int_0^{A^{\dagger}} V(y) \Phi_j(y,\xi)dy, \;\;\; j=1,2.$$
From  equation \eqref{eqn_rho_2}, we observe that if $\rho_0(a) \geq0$, then, $(1-\rho_{\Phi_j}(a,\xi))$ always non-negative, and hence $\widehat{\Phi}_j(a,\xi)$ is always non-negative, for $j=1,\,2$. To simplify  forthcoming mathematical expressions, we use the notation
$$\Delta(a,.)\,=\,\widehat{\Phi}_1( x,.) -\widehat{\Phi}_2(a,.), $$
with `.' stands for $\xi$, $\zeta$ and $\nu$ as per necessity.
\begin{eqnarray*}
\big| \mathcal{A}(a,\zeta, \Phi_1)-\mathcal{A}(a,\zeta, \Phi_2) \big| &=& \alpha(a) S_0(a) \bigg| e^{-\alpha(a) \int_0^{\zeta} \widehat{\Phi}_1(a,\xi) d \xi} \Delta(a,\zeta)
\\
&& -\int_0^{\zeta} (r(a,\zeta-\xi)+d(a,\zeta-\xi))e^{-\alpha(a) \int_0^{\xi} \widehat{\Phi}_1(a,\nu) d \nu}\Delta(a,\xi) d\xi\\
&& + \big(e^{-\alpha(a) \int_0^{\zeta} \widehat{\Phi}_1(a,\xi) d \xi}- e^{-\alpha(a) \int_0^{\zeta} \widehat{\Phi}_2(a,\xi) d \xi} \big) \widehat{\Phi}_2( x,\zeta)+\int_0^{\zeta} \bigg(r(a,\zeta-\xi)\\
&& +d(a,\zeta-\xi)\bigg) \bigg( e^{-\alpha(a) \int_0^{\xi} \widehat{\Phi}_2(a,\nu) d \nu} -e^{-\alpha(a) \int_0^{\xi} \widehat{\Phi}_1(a,\nu) d \nu} \bigg) \widehat{\Phi}_2(a,\xi) d\xi  \bigg|.
\end{eqnarray*}
For any $p,q \geq 0,$ we have
$$\big|e^{-p}-e^{-q} \big| \leq \big| p-q\big|, \;\; \big| e^{-p} \big| \leq 1.$$
Using the above inequalities we get
\begin{subequations}
\begin{eqnarray*}
\big| \mathcal{A}(a,\zeta, \Phi_1)-\mathcal{A}(a,\zeta, \Phi_2) \big| &\leq& \alpha(a) S_0(a)\bigg( \big|\Delta(a,\zeta)\big|
+\int_0^{\zeta} (r(a,\zeta-\xi)+d(a,\zeta-\xi)) \big|\Delta(a,\xi)\big|d\xi\\
&& + \alpha(a)  \widehat{\Phi}_2(a,\zeta)  \int_0^{\zeta} \big|\Delta(a,\xi)\big|d\xi\\
&&+\int_0^{\zeta} (r(a,\zeta-\xi)+d(a,\zeta-\xi)) \big( \alpha(a)  \int_0^{\xi}\big|\Delta(a,\nu)\big|  d \nu \big) \widehat{\Phi}_2(a,\xi) d\xi  \bigg). 
\end{eqnarray*}
\end{subequations}
Since, $\Phi_j(a,\xi) \leq S_0(a)+I_0(a)$ and $0 \leq (1-\rho_{\Phi_j}(a,\xi)) \leq 1$, we can write
$$  0 \leq \widehat{\Phi}_j(a,\xi)  \leq M,  \;\;\;\;j=1,2,$$
where
$$M= (S_0(a)+I_0(a)) \int_0^{A^{\dagger}} V(y) dy.$$
Hence,
\begin{subequations}
\begin{eqnarray*}
\big|\Delta(a,\zeta)\big|&=& \Big|(1-\rho_{\Phi_1}(a,\zeta))\int_0^{A^{\dagger}} V(y) \Phi_1(y,\zeta)dy-(1-\rho_{\Phi_2}(a,\zeta))\int_0^{A^{\dagger}} V(y) \Phi_2(y,\zeta)dy \Big|\\
&=& \Big|(1-\rho_{\Phi_1}(a,\zeta))\int_0^{A^{\dagger}} V(y) (\Phi_1(y,\zeta)-\Phi_2(y,\zeta))dy+ \\
&& \int_0^{A^{\dagger}} V(y) \Phi_2(y,\zeta)dy (\rho_{\Phi_2}(a,\zeta)-\rho_{\Phi_1}(a,\zeta)) \Big|.
\end{eqnarray*}
\end{subequations}
Let $M_1= \int_0^{A^{\dagger}} V(y) dy$. Using the estimate $0 \leq (1-\rho_{\Phi_1}(a,\xi)) \leq 1$, we get from the above equation
\begin{eqnarray*}
\big|\Delta(a,\zeta)\big|&\leq& M_1 e^{\delta \zeta} d(\Phi_1,\Phi_2) + M \big|\rho_{\Phi_1}(a,\zeta)-\rho_{\Phi_2}(a,\zeta) \big|.
\end{eqnarray*}
From relation \eqref{eqn_rho_2} we obtain the following inequalities:
\begin{subequations}
\begin{eqnarray*}
\big|\rho_{\Phi_1}(a,\zeta)-\rho_{\Phi_2}(a,\zeta) \big|
&\leq& \Big| e^{- \int_0^{\zeta} \mathcal{P}(m_{\Phi_1}(a,\xi)) d \xi}- e^{- \int_0^{\zeta} \mathcal{P}(m_{\Phi_2}(a,\xi)) d \xi} \Big|\\
&\leq& \Big|\int_0^{\zeta} \mathcal{P}(m_{\Phi_1}(a,\xi)) d \xi -\int_0^{\zeta} \mathcal{P}(m_{\Phi_2}(a,\xi)) d \xi \Big| \\
&\leq & \int_0^{\zeta} \Big| \mathcal{P}(m_{\Phi_1}(a,\xi)) - \mathcal{P}(m_{\Phi_2}(a,\xi))  \Big|  d \xi.
\end{eqnarray*}
\end{subequations}
We assume that the function $\mathcal{P} \circ m(\Phi)=\mathcal{P}( m(\Phi))$ is a Lipschitz continuous function of $\Phi$, that is, 
\begin{equation}\label{lipschitz_11}
 d\big((\mathcal{P} \circ m)(\Phi_1), (\mathcal{P} \circ m) (\Phi_2)\big) \leq K d(\Phi_1, \Phi_2),
\end{equation}
where $K \geq 0$ is a constant.
This implies the estimate
\begin{equation}\label{lipschitz_2}
 \Big| \mathcal{P}(m_{\Phi_1}(a,\xi)) - \mathcal{P}(m_{\Phi_2}(a,\xi))  \Big| =  \Big| (\mathcal{P} \circ m)(\Phi_1(a,\xi)) - (\mathcal{P} \circ m)(\Phi_2(a,\xi))  \Big| \leq K e^{\delta \xi} d( \Phi_1, \Phi_2).
\end{equation}
Using \eqref{lipschitz_2}, we obtain

\begin{subequations}
\begin{eqnarray*}
\big|\rho_{\Phi_1}(a,\zeta)-\rho_{\Phi_2}(a,\zeta) \big| &\leq& K d(\Phi_1, \Phi_2) \int_0^{\zeta} e^{\delta \xi} d \xi =
 K d(\Phi_1, \Phi_2) \frac{e^{\delta \zeta}}{\delta},
\end{eqnarray*}
\end{subequations}
and, consequently, we get
$$ \big| \widehat{\Phi}_1( x,\zeta) -\widehat{\Phi}_2( x,\zeta) \big| \leq M_3 e^{\delta \zeta} d(\Phi_1, \Phi_2),$$
where $M_3=M_1+\frac{M K}{\delta}$.
From these relations and condition \eqref{cond}, we can write
\begin{subequations}
\begin{eqnarray*}
\big| \mathcal{A}(a,\zeta, \Phi_1)-\mathcal{A}(a,\zeta, \Phi_2) \big| &\leq& \alpha(a) S_0(a) M_3 d(\Phi_1,\Phi_2) \bigg[ e^{\delta \zeta} +\int_0^{\zeta}e^{\delta \xi}  d\xi \\
&& + \alpha(a) M  \int_0^{\zeta} e^{\delta \xi} d \xi +\alpha(a) M \int_0^{\zeta}    \int_0^{\xi} e^{\delta \nu} d \nu   d\xi  \bigg]\\\\
&\leq& \alpha(a) S_0(a) M_3 d(\Phi_1,\Phi_2) \bigg[ e^{\delta \zeta} +\frac{1}{\delta}(e^{\delta \zeta} -1)\\
&& + \alpha(a) M  \frac{1}{\delta}(e^{\delta \zeta} -1) +\alpha(a) M \frac{1}{\delta^2}(e^{\delta \zeta} -1) \bigg]\\\\
&\leq& \alpha(a) S_0(a) M_3 d(\Phi_1,\Phi_2) \bigg[ e^{\delta \zeta} \bigg( 1+\frac{1}{\delta} +\frac{\alpha(a) M}{\delta} +\frac{\alpha(a) M}{\delta^2} \bigg)\bigg].\\
\end{eqnarray*}
\end{subequations}
Therefore,

\begin{subequations}
\begin{eqnarray*}
\big| \mathcal{F}(\Phi_1)(a,t)- \mathcal{F}(\Phi_2)(a,t) \big| &\leq& \alpha(a) S_0(a) M_3 d(\Phi_1,\Phi_2)  \bigg( 1+\frac{1+\alpha(a) M}{\delta} +\frac{\alpha(a) M}{\delta^2} \bigg) \int_0^t e^{\delta \zeta} d \zeta\\\\
 &=& \alpha(a) S_0(a) M_3 d(\Phi_1,\Phi_2)  \bigg( 1+\frac{1+\alpha(a) M}{\delta} +\frac{\alpha(a) M}{\delta^2} \bigg) \frac{e^{\delta t}-1}{\delta},\\
\end{eqnarray*}
\end{subequations}
which implies

\begin{subequations}
\begin{eqnarray*}
e^{-\delta t} \big| \mathcal{F}(\Phi_1)(a,t)- \mathcal{F}(\Phi_2)(a,t) \big| &\leq& \alpha(a) S_0(a) M_3 d(\Phi_1,\Phi_2)  \bigg( \frac{1}{\delta}+ \frac{1+\alpha(a) M}{\delta^2}  +\frac{\alpha(a) M}{\delta^3} \bigg) \\\\
&=& \bigg[ \alpha(a) S_0(a) M_3 \bigg( \frac{1}{\delta}+ \frac{1+\alpha(a) M}{\delta^2}  +\frac{\alpha(a) M}{\delta^3} \bigg)  \bigg] d(\Phi_1,\Phi_2).
\end{eqnarray*}
\end{subequations}
By the positivity and continuity of the functions $\alpha(a)$, $S_0(a)$ on $[0, A^{\dagger}]$, there exists positive constants $M_{\alpha}$ and $M_s$ such that
$$\alpha(a) \leq M_{\alpha}\;\;\;\;\;\; S_0(a) \leq M_s,\;\;$$
for all $x \in [0,A^{\dagger}].$
Then we have
\begin{subequations}
\begin{eqnarray*}
e^{-\delta t} \big| \mathcal{F}(\Phi_1)(a,t)- \mathcal{F}(\Phi_2)(a,t) \big| &\leq& 
 \bigg[ M_{\alpha}M_s M_3 \bigg( \frac{1}{\delta}+ \frac{1+M_{\alpha} M}{\delta^2}  +\frac{M_{\alpha} M}{\delta^3} \bigg) \bigg] d(\Phi_1,\Phi_2).
\end{eqnarray*}
\end{subequations}
Now in both sides, we take supremum and obtain
\begin{eqnarray*}
d(\mathcal{F}(\Phi_1), \mathcal{F}(\Phi_2)) &\leq& 
 \bigg[ M_{\alpha}M_s M_3 \bigg( \frac{1}{\delta}+ \frac{1+M_{\alpha} M}{\delta^2}  +\frac{M_{\alpha} M}{\delta^3} \bigg) \bigg] d(\Phi_1,\Phi_2).
\end{eqnarray*}
Substituting $M_3=M_1+\frac{M KK'}{\delta}$, we obtain the inequality
\begin{subequations}
\begin{eqnarray*}
d(\mathcal{F}(\Phi_1), \mathcal{F}(\Phi_2)) &\leq& 
 \bigg[ M_{\alpha}M_s \left(M_1+\frac{M KK'}{\delta}\right) \left( \frac{1}{\delta}+ \frac{1+M_{\alpha} M}{\delta^2}  +\frac{M_{\alpha} M}{\delta^3} \right) \bigg] d(\Phi_1,\Phi_2).
\end{eqnarray*}
\end{subequations}
Now, we can fix $\delta>0$ so large that,
$$  M_{\alpha}M_s \left(M_1+\frac{M KK'}{\delta}\right) \left( \frac{1}{\delta}+ \frac{1+M_{\alpha} M}{\delta^2}  +\frac{M_{\alpha} M}{\delta^3} \right) < 1 .$$
Therefore, The map $\mathcal{F}$ defined on $(\mathcal{M},d)$, is a contraction map.
\end{proof}

From the last lemma and  Theorem \ref{Banach_theorem} it follows that the map $\mathcal{F}:\; (\mathcal{M},d) \to (\mathcal{M},d)$ has a unique fixed point $ I_u \in \mathcal{M} \subset C\big( [0, A^{\dagger}] \times [0, \mathcal{T}],\; \mathbb{R} \big)$ satisfying

$$I_u(a,t) = I_0(a) + \int_0^t \mathcal{A}(a,\zeta, I_u) d \zeta,$$
where $\mathcal{A}(a,\zeta,I_u)$ is given in \eqref{eqn_G}. Also, we observe that $\mathcal{A}(a,\zeta,\Phi)$ is a continuous function. Therefore, the derivative $\frac{\partial I_u(a,t)}{ \partial t}$ exists, and consequently, the existence of uniqueness solution of the system \eqref{s4}-\eqref{s5} is proved.


\section{Age-independent model}\label{age_independent_model}

We consider in this section a particular case of model \eqref{s4}-\eqref{s5} with age-independent parameters. Set
$$\widehat{S}(t)=\int_0^{A^{\dagger}}S(a,t)dx,\;\;\widehat{I}(t)=\int_0^{A^{\dagger}} I(a,t)dx,\;\;\widehat{R}(t)=\int_0^{A^{\dagger}} R(a,t)dx,\;\;\widehat{D}(t)=\int_0^{A^{\dagger}} D(a,t)dx.$$
Let the information index be age-independent, i.e., $m(t)\equiv m(\widehat{I}(t))$. Denote by $\widehat{\rho}(t)$ the proportion of people vaccinated at time $t$. Let $\widehat{P}_I(m)$ denote the age-independent information. Then the equation for the vaccination becomes as follows: 
$$   \frac{d \widehat{\rho}(t)}{d t} = \widehat{\mathcal{P}}_I(m) (1-\widehat{\rho}(a,t)).$$
Set $\alpha=\alpha(a)$, $W=V(a)$,  $\beta=\alpha W$, $r(t)=r(a,t)$ and $d(t)=d(a,t)$. Now, integrating the system \eqref{s4}-\eqref{s5} from $0$ to ${A^{\dagger}}$ with respect to x, we obtain the reduced system (hat is omitted):

\begin{equation}\label{hpp11}
  \frac{d S(t)}{d t} = - \beta S(t)(1-\rho(t)) I(t) \;\;( \equiv -J(t)),
\end{equation}
\begin{equation}\label{hpp12}
   \frac{d I(t)}{d t}= \beta S(t)(1-\rho(t)) I(t)  - \int_0^t r(t-\zeta) J(\zeta) d \zeta -\int_0^t d(t-\zeta) J(\zeta) d \zeta,
\end{equation}
\begin{equation}\label{hpp13}
   \frac{d R(t)}{d t} =\int_0^t r(t-\zeta) J(\zeta) d \zeta,
\end{equation}
\begin{equation}\label{hpp14}
   \frac{d D(t)}{d t} = \int_0^t d(t-\zeta) J(\zeta) d \zeta,
\end{equation}
\begin{equation}\label{hpp15}
   \frac{d \rho(t)}{d t} = \mathcal{P}(m(t)) (1-\rho(t)).
 \end{equation}
 This reduced age-independent model \eqref{hpp11}-\eqref{hpp15} is simpler than the original age-distributed model \eqref{s4}-\eqref{s5}, but it still captures the effect of information-dependent behavior of vaccine uptake. In the next subsection we find a bound for the final size of epidemic for this model.


\subsection{Final size of epidemic}

Let $S_f$ be the final size of the susceptible compartment for asymptotically large time, $\lim_{t \rightarrow \infty} S (t) = S_f$.
Integrating equation \eqref{hpp11} from $0$ to $\infty$ with respect to $t$, we get
\begin{equation}\label{final_size_eqn_1}
S_f - S_0 = - \int_0^{\infty} J(\zeta) d\zeta.
\end{equation}
On the other hand, integrating equation
$$\frac{d S(t)}{d t} = - \beta S(t)(1-\rho(t)) I(t),$$
we obtain
\begin{equation}\label{final_size_eqn_2}
\ln \frac{S_f}{S_0} = - \beta \int_0^{\infty} (1-\rho(\zeta)) I(\zeta) d \zeta.
\end{equation}
Furthermore,
\begin{subequations}
\begin{eqnarray*}
I(t)&=& \int_0^ t J(\zeta) d \zeta -R(t) -D(t) 
= \int_0^{t} J(\zeta) d\zeta -\int_0^t \int_0^{\zeta} \left( r(\zeta-\xi)+d(\zeta-\xi) \right) J(\xi) d \xi d \zeta\\
&=& \int_0^{t} \bigg[ J(\zeta)  - \int_0^{\zeta} \left( r(\zeta-\xi)+d(\zeta-\xi) \right) J(\xi) d \xi \bigg] d \zeta.
\end{eqnarray*}
\end{subequations}
Multiplying this equation by $(1-\rho(t))$ and integrating from $0$ to $\infty$ with respect to $t$, we obtain

\begin{subequations}
\begin{eqnarray*}
\int_0^{\infty} (1-\rho(t)) I(t) dt &=& \int_0^{\infty} (1-\rho(t)) \bigg( \int_0^{t} \bigg[ J(\zeta)  - \int_0^{\zeta} \left( r(\zeta-\xi)+d(\zeta-\xi) \right) J(\xi) d \xi \bigg] d \zeta \bigg) dt.
\end{eqnarray*}
\end{subequations}
We change the order of integration with respect to $\zeta$ and $t$ to get

\begin{subequations}
\begin{eqnarray*}
\int_0^{\infty} (1-\rho(t)) I(t) dt &=& \int_0^{\infty} \int_{\zeta}^{\infty} (1-\rho(t))   \bigg[ J(\zeta)  - \int_0^{\zeta} \left( r(\zeta-\xi)+d(\zeta-\xi) \right) J(\xi) d \xi \bigg]  dt d \zeta.
\end{eqnarray*}
\end{subequations}
Let $V(\zeta)=\int_{\zeta}^{\infty} (1-\rho(t)) dt$. Then we can write
\begin{eqnarray*}
\int_0^{\infty} (1-\rho(t)) I(t) dt& = &\int_0^{\infty} J(\zeta) V(\zeta) d \zeta  - \int_0^{\infty} \int_0^{\zeta} \left( r(\zeta-\xi)+d(\zeta-\xi) \right) J(\xi) V(\zeta) d \xi d \zeta  \\
& = &\int_0^{\infty} J(\xi) V(\xi) d \xi  - \int_0^{\infty} \int_{\xi}^{\infty} \left( r(\zeta-\xi)+d(\zeta-\xi) \right) V(\zeta) d \zeta J(\xi)  d \xi  \\
& =&\int_0^{\infty} \bigg[  V(\xi)  -  \int_{\xi}^{\infty} \left( r(\zeta-\xi)+d(\zeta-\xi) \right) V(\zeta) d \zeta \bigg] J(\xi)  d \xi.
\end{eqnarray*}
Furthermore,
\begin{eqnarray*}
\int_{\xi}^{\infty} \left( r(\zeta-\xi)+d(\zeta-\xi) \right) V(\zeta) d \zeta &=&
 \int_{\xi}^{\infty} \int_{\zeta}^{\infty} \left( r(\zeta-\xi)+d(\zeta-\xi) \right) (1-\rho(t))dt d\zeta  \\
 & = & \int_{\xi}^{\infty} \bigg( \int_{\xi}^{t} \left( r(\zeta-\xi)+d(\zeta-\xi) \right) d\zeta \bigg) (1-\rho(t))dt, 
\end{eqnarray*}
with  $V(\xi)=\int_{\xi}^{\infty} (1-\rho(t))dt$ gives,
\begin{eqnarray*}
V(\xi)  -  \int_{\xi}^{\infty} \left( r(\zeta-\xi)+d(\zeta-\xi) \right) V(\zeta) d \zeta\,=\,\int_{\xi}^{\infty} \bigg(1  - \int_{\xi}^{t} \left( r(\zeta-\xi)+d(\zeta-\xi) \right) d\zeta \bigg) (1-\rho(t))dt .
\end{eqnarray*}
Inequality
$$ \bigg(1  - \int_{\xi}^{t} \left( r(\zeta-\xi)+d(\zeta-\xi) \right) d\zeta \bigg) <1 $$ 
follows from condition \eqref{cond}. For some particular choice of the functions $r(\zeta)$ and $d(\zeta)$, a more precise estimate can be obtained. We have
$$V(\xi)  -  \int_{\xi}^{\infty} \left( r(\zeta-\xi)+d(\zeta-\xi) \right) V(\zeta) d \zeta  \leq \int_{\xi}^{\infty} (1-\rho(t))dt . $$ 
Therefore,   
$$ \int_0^{\infty} (1-\rho(t)) I(t) dt \leq \int_0^{\infty} \int_{\xi}^{\infty} (1-\rho(t)) J(\xi) dt d \xi. $$
Integrating \eqref{hpp15} from $0$ to $t$, we get
$$ (1-\rho(t))= (1-\rho_0) e^{-\int_0^t \mathcal{P}(m(I(\zeta))) d \zeta} .$$
For simplicity of calculation we assume that 
$\mathcal{P}(m(I(\zeta))) \geq k$. This implies
$(1-\rho(t)) \leq (1-\rho_0) e^{-kt}$
and, consequently,

\begin{subequations}
\begin{eqnarray*}
 \int_0^{\infty} (1-\rho(t)) I(t) dt &\leq&  \int_0^{\infty} \bigg( \int_{\xi}^{\infty}  (1-\rho_0) e^{-kt} dt \bigg) J(\xi) d \xi 
  \leq \frac{(1-\rho_0)}{k} \int_0^{\infty} J(\xi) d \xi.
\end{eqnarray*}
\end{subequations}
From this inequality and \eqref{final_size_eqn_1}, \eqref{final_size_eqn_2}, we get the following estimate for the final size of epidemic:

\begin{equation}\label{final_size_F}
\frac{\ln \frac{S_0}{S_f}}{S_0 - S_f} \leq \frac{\beta (1-\rho_0)}{k}.
\end{equation}
Set $x=S_f/S_0$, $\Lambda= \beta (1-\rho_0) S_0/k$. Then  inequality (\ref{final_size_F}) can be written as follows:
\begin{equation}\label{final_size_F2}
\frac{\ln x}{x-1} \geq \Lambda.
\end{equation}
Denote $\mathcal{G}(a)=\frac{\ln x}{x-1}$. Clearly, $\mathcal{G}(a)$ is a strictly decreasing function in $(0, 1)$, $\lim_{x \to 0} \mathcal{G}(a) = \infty$ and $\lim_{x \to 1-} \mathcal{G}(a) =1$. Suppose that $\Lambda >1$. Then there exists a unique value $x^* \in (0, 1)$ such that $\mathcal{G}(a^*)=\Lambda$. If    $\mathcal{G}(a) \geq \Lambda$ for some $a$, then $x \leq x^*$. Note that 

$$ x^* \propto \frac{1}{\Lambda} = \frac{\beta (1-\rho_0) S_0}{k} . $$ 
Hence, the upper bound of final size of epidemic $x^*$ is proportional to $k$, the minimal probability of vaccination.


\section{Numerical simulations}
\label{simulation_results}

In this section we illustrate the effect of information-based vaccination decision on the epidemic progression with numerical simulations of the reduced age-independent \eqref{hpp11}-\eqref{hpp15}. We consider the distributed rate functions $r(t)$ and $d(t)$ as used in \cite{bmb},
$r(t)=p_0 f_1(t)$ and $d(t)=(1-p_0)f_2(t)$ where

$$f_1(t)=\frac{1}{b_1^{a_1} \Gamma (a_1)} t^{a_1-1} 
e^{-\frac{t}{b_1}} \;\;\; \text{and}\;\;\; f_2(t)=\frac{1}{b_2^{a_2} \Gamma (a_2)} t^{a_2-1} e^{-\frac{t}{b_2}},$$
with the values of parameters $a_1=8.06275$, $b_1=2.21407$, $a_2=6.00014$, $b_2=2.19887$ and $p_0=0.97$.
%
%
%
%
%
%

Let us consider the information index in the form \cite{d2007vaccinating}
$$m(t)=\int_0^t K_{time}(\zeta) I(t-\zeta) d \zeta,$$
where the kernel $K_{time}$ corresponds to the information about infection progression. We consider two types of kernels, with memory decay:
$$K_{time}(\zeta)=a e^{-a \zeta},$$
where $a >0$ is the decay rate, and with memory acquisition-decay:
$$K_{time}(\zeta)=\frac{b d}{d-b} ( e^{-b\zeta} - e^{-d \zeta}),$$
where $0<b<d$ (Fig.~\ref{different_K_1}, a). The exponentially fading kernel captures the fact that the information for vaccination decision
declines with time. The acquisition-decay kernel captures the fact of acquisition of information followed by a fading of memory.

Let us note that both kernels are normalized in such a way that their integrals from $0$ to infinity equal $1$. Large values of $a$ in the first kernel correspond to the short-term memory, while small $a$ to the long term memory. Similarly, parameters of the second kernel determine the corresponding memory distribution. We will see below how the memory distribution influences epidemic progression.

\begin{figure}[ht!]
\begin{center}

      \mbox{\subfigure[]{\includegraphics[scale=0.45]{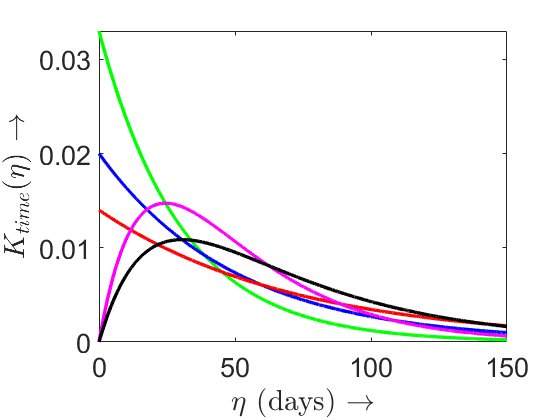}}
		   \subfigure[]{\includegraphics[scale=0.47]{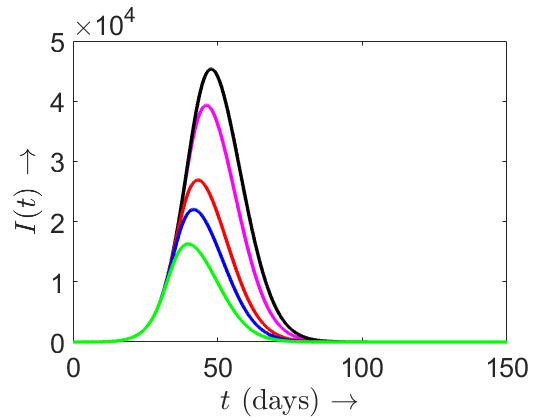}}}
		   	\mbox{\subfigure[]{\includegraphics[scale=0.45]{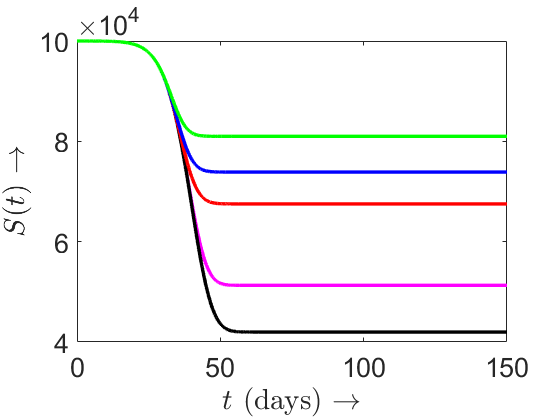}}}
\caption{(a) Graphical representation of the memory kernels $K_{time}(\zeta)=a e^{-a \zeta}$, with $a=0.033$ (green), $a=0.02$ (blue), $a=0.014$ (red) and $K_{time}(\zeta)=\frac{b d}{d-b} ( e^{-b\zeta} - e^{-d \zeta}),$ with $b=0.033, d=0.05$ (magenta); $b=0.02, d=0.05$ (black). (b) and (c) show the corresponding plots of $I(t)$ and $S(t)$, respectively. All other parameter values are given in the text.}
\label{different_K_1}
\end{center}
\end{figure}

To complete the formulation of the model, we define
the function $\mathcal{P}(m)$  as follows:

\begin{equation}\label{game_eqn_2}
 \mathcal{P}(m) = k+ Min \bigg\{c \big( m- m_0\big) \; Hev\big( m- m_0\big),\;1-k \bigg\} .
\end{equation}
Here $k$ is the minimal probability of vaccination independent of information, $k \in (0,1)$, $c$ is a non-negative constant, $Hev(\cdot)$ is the Heaviside function. This function equals $k$ for $m \leq 0$, it has a linear growth on some interval of positive values of $m$, and it equals $1$ for $m$ sufficiently large.

Numerical simulations of system \eqref{hpp11}-\eqref{hpp15} are presented in Fig.~\ref{different_K_1} (b, c)
for the values of parameters
$N=10^5$, $\beta=0.4/N$, $S(0)=N-1$, $I(0)=1$, $R(0)=0$, $D(0)=0$, $\rho(0)=0$, $k=0.02$, $c=10^{-4}$, $m_0=500$, and different values of parameters $a, b, d$ that determine the memory kernels. 

We note that the beginning of the epidemic outbreak does not depend on parameters and on the kernel type. However, further epidemic progression is different. 
For larger values of $a$, the maximal number of infected individuals and their total number is less than for small values of $a$. Similarly, the final size of susceptible individuals is larger. Hence, the short-term memory is more efficient to restrain epidemic progression. A similar conclusion holds for the second kernel.


\section{Discussion}

This work continues the investigation of epidemic models with time distributed recovery and death rates. Similar to the previous models \cite{mmnp_imep,mathematics_our,bmb}, the evolution of the number of susceptible individuals $S(t)$ and infected individuals $I(t)$ is determined through the number of newly infected individuals $J(t) = \beta S(t) I(t)/N$. This approach allows a more accurate understanding of the epidemic progression than conventional SIR model. An age-distributed model with distributed recovery and death rate was introduced in \cite{age_dependent_our1}, and the influence of vaccination waning was studied in \cite{mmnp_imep}. 

Vaccination campaign during COVID-19 pandemic was accompanied by wide spread hesitancy and resistance \cite{agarwal2021socioeconomic,dror2020vaccine,peretti2020future} related to the precipitation in the vaccine development and by the reporting of some adverse effects. This vaccination avoidance was reinforced by the social networks and various informal mass media \cite{della2021volatile,MBE_VV_AD}. Though it is difficult to estimate exactly the influence of available information on the vaccination behavior, we can reasonably assume that the willingness to vaccinate is proportional to the number of infected individuals \cite{baumgaertner2020risk,cerda2021willingness}. Moreover, information-based decision includes some memory effect taken into account through the memory kernel.

In order to describe the age-specific vaccination decision effects on epidemic progression, we propose an integro-differential system of equations for susceptible, infected, recovered/dead individuals, and  for the proportion of vaccinated individuals. To mathematically justify the proposed model, we proved the existence and uniqueness of positive solution with the fixed point theory. 

The model developed in this work is able to capture the effect of age-specific information-based vaccination decision. In the prediction regarding the epidemic, we need more detailed age-specific data to estimate the distributed parameters involved in the model. However, in a simpler case, for the age-independent model, we show the effect of information on epidemic progression. The kernel $K_{time}(\zeta)$ involved in the information index $m(t)$ reflects the information about the effect of vaccination on previous epidemic progression. 

We observe that the initial stage of the epidemic outbreak is similar for all kernels since the memory effect is not yet influential, and the epidemic progression is mainly determined by the initial conditions and the transmission dynamics of the disease. However, as the epidemic progresses and more information about the disease becomes available, the information-based vaccination decision can have a significant impact on the course of the epidemic. Short-term memory appears to be more efficient to restrain the epidemic burst because it allows for more responsive and adaptive vaccination decisions based on the most recent information about the disease. The question of age-distributed vaccination decision requires further investigations and availability of relevant data. In the case of COVID-19 epidemic, young age groups provide, at the same time, higher rate of epidemic transmission and stronger vaccination resistance due to lower mortality rate.

\section*{Acknowledgements} Vitaly Volpert has been supported by the RUDN University Scientific Projects Grant System, project No 025141-2-174.


\section*{Appendix 1}

 \begin{lemma}
 Consider the metric space $\big( \mathcal{M}, d \big) $ with
 $$d(\Phi_1,\Phi_2)=\sup_{(a,t)\in [0,A^{\dagger}]\times[0,\mathcal{T}]} \bigg\{ e^{-\delta t} | \Phi_1(a,t)-\Phi_2(a,t)| \bigg\},$$
 where $\delta >0$ is a constant. Then $\big( \mathcal{M}, d \big) $ is complete.
 \end{lemma}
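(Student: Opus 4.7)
The plan is to reduce completeness of $(\mathcal{M},d)$ to the classical completeness of the space of continuous functions on a compact rectangle under the uniform metric. The key observation is that on the compact domain $[0,A^{\dagger}]\times[0,\mathcal{T}]$, the weight factor $e^{-\delta t}$ is bounded above by $1$ and below by $e^{-\delta\mathcal{T}}>0$, so the metric $d$ is strongly equivalent to the standard uniform metric
$$d_{\infty}(\Phi_1,\Phi_2)=\sup_{(a,t)\in[0,A^{\dagger}]\times[0,\mathcal{T}]}|\Phi_1(a,t)-\Phi_2(a,t)|.$$
First I would establish the two-sided bound $e^{-\delta\mathcal{T}}\,d_{\infty}(\Phi_1,\Phi_2)\le d(\Phi_1,\Phi_2)\le d_{\infty}(\Phi_1,\Phi_2)$, which shows immediately that a sequence is $d$-Cauchy if and only if it is $d_{\infty}$-Cauchy, and similarly for convergence.

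Next I would take an arbitrary $d$-Cauchy sequence $\{\Phi_n\}\subset\mathcal{M}$. By the equivalence above, $\{\Phi_n\}$ is also Cauchy in the uniform metric on $C([0,A^{\dagger}]\times[0,\mathcal{T}],\mathbb{R})$. Since this latter space is a Banach space (a classical result: the uniform limit of continuous functions on a compact set is continuous), there exists a continuous limit function $\Phi^*$ such that $\Phi_n\to\Phi^*$ uniformly, and hence also in the metric $d$.

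It remains to verify that the limit $\Phi^*$ lies in $\mathcal{M}$, i.e.\ that it satisfies the pointwise bound $0\le\Phi^*(a,t)\le S_0(a)+I_0(a)$. This step is routine: uniform convergence implies pointwise convergence, and each $\Phi_n$ satisfies the same inequality, so the bound passes to the limit. Thus $\mathcal{M}$ is closed in the complete space $\bigl(C([0,A^{\dagger}]\times[0,\mathcal{T}],\mathbb{R}),d_{\infty}\bigr)$, and a closed subset of a complete metric space is itself complete.

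The argument is largely a bookkeeping exercise; the only mildly delicate point is the metric equivalence, but this is immediate from the compactness of the time interval, so no genuine obstacle arises. I would simply be careful to invoke the finiteness of $\mathcal{T}$ explicitly, since the equivalence would fail on $[0,\infty)$, and this is precisely where the boundedness of the exponential weight is needed.
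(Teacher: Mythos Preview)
Your proposal is correct and follows essentially the same approach as the paper: both arguments rest on the two-sided estimate $e^{-\delta\mathcal{T}}d_{\infty}\le d\le d_{\infty}$ to obtain strong equivalence with the uniform metric, then invoke completeness of $C([0,A^{\dagger}]\times[0,\mathcal{T}],\mathbb{R})$ together with closedness of $\mathcal{M}$. Your write-up is in fact slightly more thorough, since you spell out why $\mathcal{M}$ is closed (the pointwise bounds pass to uniform limits), whereas the paper simply asserts this.
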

 
\begin{proof} 
Clearly, $\mathcal{M}$ is a closed subset of $C\big( [0, A^{\dagger}] \times [0, \mathcal{T}],\; \mathbb{R} \big)$. We know that, $C\big( [0, A^{\dagger}] \times [0, \mathcal{T}],\; \mathbb{R} \big)$ is complete with respect to the metric
 $$d_{\sup}(\Phi_1,\Phi_2)=\sup_{(a,t)\in [0,A^{\dagger}]\times[0,\mathcal{T}]} \bigg\{ | \Phi_1(a,t)-\Phi_2(a,t)| \bigg\}.$$
Hence, $( \mathcal{M}, d_{\sup})$ is also complete. \\
 Now we have
 $$e^{-\delta \mathcal{T}} d_{\sup}(\Phi_1,\Phi_2) \leq d(\Phi_1,\Phi_2) \leq d_{\sup}(\Phi_1,\Phi_2),$$
which shows that $d_{\sup}$ and $d$ are equivalent metrices, and consequently, $(\mathcal{M},d)$ is a complete metric space.
 \end{proof}

\end{document}